\title{Rectilinear Link Diameter and Radius in a Rectilinear Polygonal Domain\footnote{An extended abstract appeared at the 29th International Symposium on Algorithms and Computation (ISAAC 2018)~\cite{isaacversion}.
     EA was supported by the SNF Early Postdoc Mobility grant P2TIP2-168563, Switzerland, F.R.S.-FNRS, Belgium, and by the Foundation for the Advancement of Theoretical Physics and Mathematics ``BASIS'', Russia.
     MC, AvR and MR were supported by JST ERATO Grant Number JPMJER1201, Japan.
     MC was also supported in part by ERC StG 757609.
     MK was supported in part by KAKENHI No. 17K12635, Japan and NSF award CCF-1422311.
     AM was supported by the Netherlands' Organisation for Scientific Research (NWO) under project no.~024.002.003.
     YO was partially supported by JSPS KAKENHI Grant Number 15K00009 and JST CREST Grant Number JPMJCR1402, and Kayamori Foundation of Informational Science Advancement.
     AO was supported by the Fund for Research Training in Industry
and Agriculture (FRIA).}}
\author[1]{Elena Arseneva}
\author[2]{Man-Kwun Chiu}
\author[3]{Matias Korman}
\author[4]{Aleksandar Markovic}
\author[5,6]{Yoshio Okamoto}
\author[7]{Aur\'elien Ooms}
\author[8]{Andr\'{e} van Renssen}
\author[4]{Marcel Roeloffzen}
\affil[1]{St. Petersburg State University, St. Petersburg, Russia\\
  \texttt{e.arseneva@spbu.ru}}
\affil[2]{Institut f\"ur Informatik, Freie Universit\"at Berlin, Berlin, Germany\\
  \texttt{chiumk@zedat.fu-berlin.de}}
\affil[3]{Tufts University, Boston, USA\\
  \texttt{matias.korman@tufts.edu}}
\affil[4]{TU Eindhoven, Eindhoven, the Netherlands\\
  \texttt{\{a.markovic,m.j.m.roeloffzen\}@tue.nl}}
\affil[5]{University of Electro-Communications, Tokyo, Japan\\
  \texttt{okamotoy@uec.ac.jp}}
\affil[6]{RIKEN Center for Advanced Intelligent Project, Tokyo, Japan}
\affil[7]{Universit\'e libre de Bruxelles (ULB), Brussels, Belgium\\ 
  \texttt{aureooms@ulb.ac.be}}
\affil[8]{University of Sydney, Sydney, Australia\\ 
  \texttt{andre.vanrenssen@sydney.edu.au}}
\date{}
\begin{document}

\maketitle

\begin{abstract}
We study the computation of the diameter and radius under the rectilinear link
distance within a rectilinear polygonal domain of $n$ vertices and $h$ holes.
We introduce a \emph{graph of oriented distances} to encode the distance between pairs of points of the domain. This helps us transform the problem so that we can search through the candidates more efficiently. 
Our algorithm computes both the diameter and the radius in
$O (\min(n^\omega, n^2 + nh \log h + \chi^2))$ time, where
$\omega<2.373$ denotes the matrix multiplication exponent and $\chi\in
\Omega(n)\cap O(n^2)$ is the number of edges of the graph of
oriented distances. We also provide an alternative algorithm for computing the diameter that runs in $O(n^2 \log n)$ time.

\end{abstract}


\section{Introduction}

Diameters and radii are popular characteristics of metric spaces. For a compact set $S$ with a metric $d\colon S\times S \to \mathbb{R}^+$, its diameter is defined as
$
\mathrm{diam}(S) := \max_{p \in S}\max_{q \in S} d(p,q),
$ 
and its radius is defined as
$
\mathrm{rad}(S) := \min_{p \in S}\max_{q \in S} d(p,q).
$
The pair $(p,q)$ and the point $p$ that realize these distances
are called the \emph{diametral pair} and \emph{center}, respectively.
These terms are the natural extension of the same concepts in a disk and give some interesting properties of the environment, such as the worst-case response time
or ideal location of a serving facility.

Much research has been devoted towards finding efficient algorithms to compute the diameter and radius for various types of sets and metrics. In computational geometry, one of the most
well-studied and natural metric spaces is a polygon in the plane. This paper focuses on the computation of the diameter and the radius of a rectilinear polygon, possibly with holes
(i.e., a \emph{rectilinear polygonal domain}) under
the \emph{rectilinear link distance}. Intuitively, this metric measures the minimum number of links (segments) required in any rectilinear path connecting two points in the domain, where rectilinear indicates that we are restricted to horizontal and vertical segments only. 

\subsection{Previous Work}
The ordinary link distance is a very natural metric and simple to describe. Initially, the interest was motivated by the potential robotics applications (i.e., having some kind of robot
with wheels for which moving in a straight line is easy, but making turns is costly in time or energy). Since then, it has attracted a lot of attention from a theoretical point of view.

Indeed, many problems that are easy under the $L_1$ or Euclidean metric turn out to be more challenging under the link distance. For example,
the shortest path between two points in a polygonal domain of combinatorial complexity $n$ can be found in $O(n\log n)$ time for both Euclidean~\cite{DBLP:journals/siamcomp/HershbergerS99}
and $L_1$ metrics~\cite{ref:MitchellAn89,ref:MitchellL192}. However, even approximating the shortest path within a factor of $(2-\epsilon)$ under the link distance is 3-SUM hard~\cite{DBLP:journals/comgeo/MitchellPS14},
and thus it is unlikely that a significantly subquadratic-time algorithm is possible.
Recently the bit complexity of this problem was studied~\cite{DBLP:journals/jocg/KostitsynaLPS17} and it was shown that sometimes
$\Omega(n\log{n})$ bits is required to represent coordinates of some vertices of the optimal path.

The problem of computing the diameter and radius is no exception to this rule: when polygons are simple (i.e., they do not have holes) and have $n$ vertices, the diameter and center can be found in linear time for both Euclidean~\cite{DBLP:journals/dcg/AhnBBCKO16,DBLP:journals/siamcomp/HershbergerS97} and $L_1$ metrics~\cite{DBLP:journals/comgeo/BaeKOW15}. However, the best known algorithm for the link distance runs in $O(n\log n)$ time~\cite{DBLP:journals/dcg/DjidjevLS92,DBLP:journals/jcss/Suri89}. Lowering the running times or proving the impossibility of this is a longstanding open problem in the field. The only partial answer to this question was given by Nilsson and Schuierer~\cite{DBLP:conf/cga/NilssonS91,DBLP:journals/comgeo/NilssonS96}; they showed that the diameter and center can be found in linear time under the rectilinear link distance (i.e., when we are only allowed to use rectilinear paths).

We focus on polygons with holes. The addition of holes to the domain introduces significant difficulties to the problem. For example, the diameter and radius under the rectilinear link distance can be uniquely realized by points in the interior of a polygonal domain (see Figure~\ref{fig:diam-not-vertices}). Hence, it does not suffice to determine the distance only between every pair of vertices of the domain. Other strange situations can happen, such as the diameter and radius being arbitrarily close (see e.g. Figure~\ref{fig:diam_radius}).

These difficulties have a clear impact in the runtime of the algorithms. In most metrics, the runtime changes from linear or slightly superlinear to large polynomial terms. The difference between the link distance and other metrics becomes even more significant: no algorithm for computing the diameter and radius under the link distance is known, not even one that runs in exponential time (or one that works for particular cases such as rectilinear polygons). A summary of the best running time for computing the diameter and center under different metrics can be found in Table~\ref{table:summary}.

In this paper we provide the first step towards understanding such a difficult metric. Similarly to the simple polygon case~\cite{DBLP:conf/cga/NilssonS91,DBLP:journals/comgeo/NilssonS96}, we start by considering the computation of both the diameter and radius under the rectilinear link distance. We hope that the ideas of this paper will motivate future research in solving the more difficult problem of computing the diameter and radius under the (ordinary) link distance.

\begin{table}[t]
\centering
\caption{Summary of the best known results for computing the diameter and radius of a polygonal domain of $n$ vertices and $h$ holes under
different metrics. In the table,
$\omega<2.373$ is the matrix multiplication exponent.
}%
\label{table:summary}
\begin{tabular}{|c|cc||cc|}
\cline{1-5}
Metric &
\multicolumn{2}{c||}{Simple polygon} &
\multicolumn{2}{c|}{Polygonal domain}
\\
\hline
 & Diameter & Radius & Diameter & Radius \\
\cline{2-5}
Euclidean &
$O(n)$~\cite{DBLP:journals/siamcomp/HershbergerS97} &
$O(n)$~\cite{DBLP:journals/dcg/AhnBBCKO16} &
$O(n^{7.73})$~\cite{DBLP:journals/dcg/BaeKO13} &
$O(n^{11}\log n)$~\cite{DBLP:journals/jocg/Wang18} \\
$L_1$ &
$O(n)$~\cite{DBLP:journals/comgeo/BaeKOW15} &
$O(n)$~\cite{DBLP:journals/comgeo/BaeKOW15} &
$O(n^2+h^4)$~\cite{DBLP:journals/dcg/BaeKMOPW17} &
$\tilde{O}(n^4+n^2h^4)$~\cite{DBLP:journals/dcg/BaeKMOPW17} \\
Ordinary link &
$O(n\log n)$~\cite{DBLP:journals/jcss/Suri89} &
$O(n\log n)$~\cite{DBLP:journals/dcg/DjidjevLS92} &
open &
open
\\
Rectilinear link &
$O(n)$~\cite{DBLP:conf/cga/NilssonS91} &
$O(n)$~\cite{DBLP:journals/comgeo/NilssonS96} &
$O(n^2\log n)$~(Thm.~\ref{theo:DiameterAlgorithm}) &
$O(n^\omega)$~(Thm.~\ref{theo:MatrixMultiplication}) 
\\
\cline{1-5}
\end{tabular}
\end{table}

\begin{figure}[!tbp]
  \centering
  \begin{minipage}[t]{0.48\textwidth}
  \centering
    \includegraphics[scale=0.8]{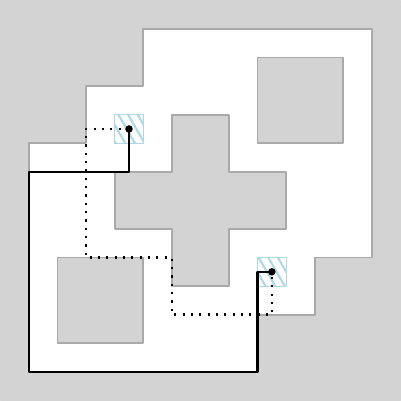}
    \caption{An example showing no diametral pair
lies on the boundary of the polygonal domain. The points in the dashed blue regions will have distance 6
from each other (out of the 4 shortest paths connecting them two are shown) whereas other pairs will have distance 5 or less.}
    \label{fig:diam-not-vertices}
  \end{minipage}
  \hfill
    \begin{minipage}[t]{0.48\textwidth}
    \centering
    \includegraphics[scale=0.8]{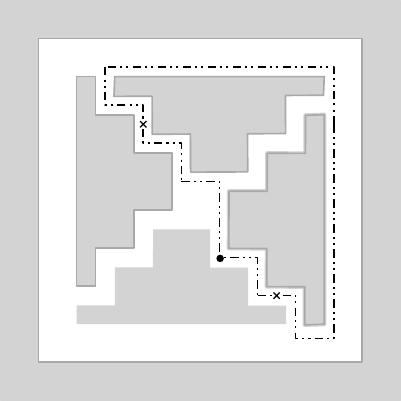}
    \caption{Example with diameter 8 (crossed points) and radius 7 (dotted point).
    By increasing the number of bends in the holes the diameter and radius become arbitrarily close. Note that any point in the domain is either a center or belongs to a diametral pair.}\label{fig:diam_radius}
  \end{minipage}
\end{figure}
\subsection{Results}
Several of the difficulties of the link distance disappear when restricting the problem to a rectilinear setting. For example, one can easily
partition the domain into $O(n^2)$ rectangular cells such that all points in a cell have the same distance to all points in another cell (for some domains the number of such cells is $\Theta(n^2)$). With this partition, brute-force algorithms that find the diameter and radius in $O(n^3\log n)$ time immediately follow. Alternatively, you can use a slightly coarser method to approximate either value: in $O(n^2 + nh \log h)$
or $O(n^2\log\log n)$ time we can compute an estimate of the diameter (details of these methods are given in Section~\ref{sec_orient}). This estimate will either be the exact diameter
or will be the diameter plus one (i.e., the path computed may contain an additional link that is not needed).

In our work we improve this second approach. By using some geometric observations, we characterize exactly when the estimate is off by one unit.
Thus, we can transform the approximation algorithm into an exact one by adding a verification step that checks whether or not the one additive error has actually happened.

We provide three different algorithms for making the above additional verification step. In Section~\ref{sec:Characterization} we characterize what we should look for to determine what the exact diameter is. This characterization then leads to a brute-force algorithm that runs in $O(n^2 + nh \log h+ \chi^2)$ time, where $\chi$ is a parameter of the input that ranges from $\Theta(n)$ to $\Theta(n^2)$. To reduce running times when $\chi$ is large we present another algorithm to compute the diameter in Section~\ref{sec:DiameterImprovement}. This algorithm, which runs in $O(n^2 \log n)$ time, exploits properties of the diameter. Specifically, we heavily use that this value is a maximum over a maximum of distances, hence it can only be used for the diameter (recall that we have a minimum-maximum alternation in the definition of the radius). For the radius we then present a third algorithm that uses matrix multiplication to speed up computation. This solution runs in time $O(n^{\omega})$, where $\omega < 2.373$ is the matrix multiplication exponent (Le Gall~\cite{DBLP:conf/issac/Gall14a} provided
the best known bound on~$\omega$). This last solution can also be adapted to compute the diameter, but our second algorithm results in a faster method.

Another interesting benefit of our approach is that we may be able to obtain a {\em certificate}. In previous algorithms for computing the diameter or center in polygonal domains, the diameter is found via exhaustive search. Thus, even if somehow the points that realize the diameter or center are given, the only way to verify that the answer is correct is to run the whole algorithm. In our algorithm, knowing the diameter can reduce the time needed for verification. Although the reduction in computation time is not large (from $O(n^2\log n)$ for computing to $O(n^2\log \log n)$ for verifying the diameter, for example), we find it to be of theoretical interest.

Further note that, when comparing with the algorithms for other metrics, 
the running time for simple and polygonal domains  differs by at least a cubic factor. In our case, running times only
increase by a slightly superlinear factor when compared to the case of simple polygons~\cite{DBLP:conf/cga/NilssonS91,DBLP:journals/comgeo/NilssonS96}. This is partially due
to the fact that rectilinear link distance is much easier than the ordinary link distance, but also because we use this new verification approach. We believe
this to be our main contribution and hope that it motivates a similar approach in other metrics.

\subsection{Preliminaries}

A \emph{rectilinear simple polygon} (also called an orthogonal polygon)
is a simple polygon that has horizontal and vertical edges only. A
\emph{rectilinear polygonal domain} $P$ with $h$ pairwise disjoint
holes and $n$ vertices is a connected and compact subset of
$\mathbb{R}^2$ with $h$ pairwise disjoint holes, in which
the boundary of each hole is a simple closed rectilinear curve.
Thus, the boundary $\partial P$ of $P$ consists of $n$ line segments.

Each of the holes as well as the outer boundary of $P$ is regarded as an \emph{obstacle} that  paths in $P$ are not allowed to cross. A \emph{rectilinear path~$\pi$} from $p \in P $ to $q \in P $ is a path from~$p$ to~$q$ that consists of vertical and horizontal segments, each contained in~$P$, and such that along~$\pi$ each vertical segment is followed by a horizontal one and vice versa. Recall that $P$ is a closed set, so $\pi$ can traverse the boundary of $P$ (along the outer face and any of the $h$ obstacles).

We define the \emph{link length} of such a path to be the number of segments composing it. The \emph{rectilinear link distance} between points~$p,q\in P$ is defined as the minimum link length of a rectilinear path from $p$ to $q$ in $P$,
and denoted by $\rld(p,q)$.
 It is well known that in rectilinear polygonal domains there always exists a rectilinear polygonal path between any two points $p, q \in P$, and thus the distance is well defined.
Once the
distance is defined, the definitions of \emph{rectilinear link diameter} $\mathrm{diam}(P)$ and \emph{rectilinear link radius} $\mathrm{rad}(P)$ directly follow.

For simplicity in the description, we assume that a pair of vertices do not share the same $x$- or $y$-coordinate unless they are connected by an edge. This general position assumption can be removed with classic symbolic perturbation techniques. Also notice that, since we are considering rectilinear polygons, no edge has length~0. However, for simplicity in the analysis we will allow edges in a rectilinear path to have length~0. These edges of length 0 are considered as edges and thus potentially contribute to the link distance (naturally, no shortest path will ever have such an edge). The reason for considering these is that we will consider oriented paths, where the first and last edge are forced to be horizontal or vertical, this enforcement may require edges of length 0. From now on, for ease of reading, we will refer to rectilinear simple polygons and rectilinear polygonal domains as ``simple polygons'' and ``domains.'' Similarly, we will use the term ``distance'' to refer to the rectilinear link distance.

\section{Graph of Oriented Distances}\label{sec_orient}

In this section we introduce the graph of oriented distances and show how it can be used to encode the rectilinear link distance between points of the domain. We note that, although we have not been able to find a reference to this graph in the literature, some properties are already known. For example, the horizontal and vertical decompositions (defined below) were used by Mitchell {\em et al.}~\cite{mpsw-oamlrptrd-19} to compute minimum-link rectilinear paths.

For any domain $P$, we
extend any horizontal segment of the domain to the left and right until it hits another segment of $P$,
partitioning it into rectangles. We call this partition the \emph{horizontal decomposition} (see Figure~\ref{fig:hor-ver-decomp}, left). Let \( \hdec(P) \) be the set containing those rectangles. Similarly, if we
extend all the vertical segments up and down,
we get the \emph{vertical decomposition} (see Figure~\ref{fig:hor-ver-decomp}, right). Let  \( \vdec(P) \) be the set of rectangles in this second decomposition. Observe that both decompositions have linear size and can be computed in $O(n\log n)$ time with a  plane sweep.

The overlay of both subdivisions creates a finer subdivision that has the well-known property that pairwise cell distance is constant (that is, the distance between any pair of points in two fixed cells of this subdivision will remain constant). Thus, by computing the distance between all pairs of cells we can find both the diameter and center. The major problem of this approach is that the finer subdivision may have $\Omega(n^2)$ cells, and thus it is hard to obtain an algorithm that runs in subcubic time. Instead, we avoid the overlay and use both subdivisions separately to obtain the diameter.

\shortlong{}{
\begin{figure}
\centering
\includegraphics{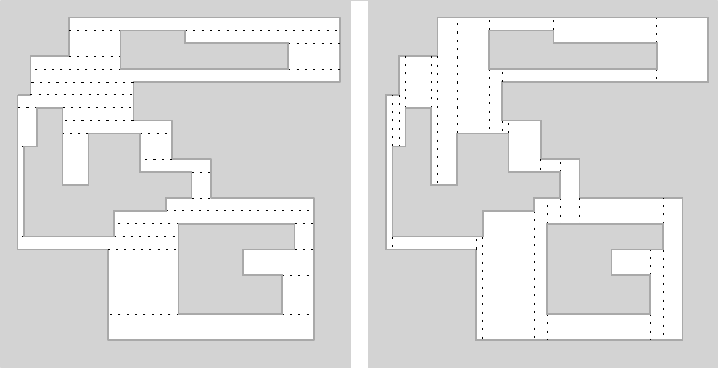}
\caption{The horizontal and vertical decompositions of the domain. Note that, because of our general position assumption, both subdivisions have the same number of rectangles.}
\label{fig:hor-ver-decomp}
\end{figure}
} 

Given two rectangles $i,j \in \hdec(P) \cup \vdec(P)$, we use $i \sqcap j$ to denote the boolean operation which returns \emph{true} if and only if
the rectangles $i$ and $j$ properly intersect (i.e. their intersection has non-zero area). This implies that one of $i,j$ belongs to $\hdec(P)$, and the other to $\vdec(P)$.

\begin{definition}[Graph of Oriented Distances]
Given a rectilinear polygonal domain $P$, let $\orgr(P)$ be the unweighted undirected graph defined as
$\orgr(P) = (\hdec(P) \cup \vdec(P),\{\, (h,v) \in \hdec(P) \times \vdec(P) \colon\, h \sqcap v 
\,\})$.
\end{definition}

In other words, vertices of  \( \orgr(P) \) correspond to rectangles of the horizontal and the vertical decompositions of \( P \). We add an edge between two vertices if and only if the corresponding rectangles properly intersect. Note that this graph is bipartite, and has $O(n)$ vertices. From now on, we make a slight abuse of notation and identify a rectangle with its corresponding vertex (thus, we talk about the neighbors of a rectangle $i \in \hdec(P)$ in $\orgr(P)$, for example).

The name \emph{Graph of Oriented Distances} is 
explained as follows (see also the paragraph after Lemma~\ref{lem:lpq}). Consider a rectilinear path $\pi$ between two points in $P$. Each horizontal edge of $\pi$ is contained in a rectangle of $\hdec(P)$ and each vertical edge is contained in a rectangle of $\vdec(P)$. A bend in the path takes place in the intersection of the rectangles containing the two adjacent edges and corresponds to an edge of $\orgr(P)$. So every rectilinear path $\pi$ has a corresponding walk $\pi'$ in $\orgr(P)$ (and {\em vice versa}). Moreover, each bend of $\pi$ is associated with an edge of $\pi'$. 

\begin{definition}[Oriented distance]
	Given a rectilinear polygonal domain \( P \), let \( i \) and \( j \)
	be two vertices of \( \orgr(P) \), let \( \ordist(i,j)\) to be the length of
	the shortest path between $i$ and $j$ in graph \( \orgr(P) \) plus one. We also define \( \ordist(i,i) = 1\).
\end{definition}
The reason why we add the extra unit is to make sure that the link distance and the oriented distance match (see Lemma~\ref{lem:lpq} below). We first list some useful properties of the oriented distance, which directly follow from the definition.
Then we show the relationship between the oriented distance $\ordist(\cdot,\cdot)$ in $\orgr(P)$
and the link distance $\rld(\cdot,\cdot)$ in $P$.

\begin{lemma}\label{lemma:oriented-dist-props}
Let \( i, j, i', j' \) be any (not necessarily distinct) rectangles in $\hdec(P)\cup \vdec(P)$
such that \( i \sqcap i'\), and $j \sqcap j'$. Then, the following hold.
\begin{enumerate}[leftmargin=*,label=(\alph*),ref=(\alph*)]
\item $\ordist(i,j) = \ordist(j,i).$
\item\label{lem:flip-i}
$ \ordist(i',j) \in \{\,
    \ordist(i,j) - 1,
    \ordist(i,j) + 1
  \,\}.
$

\item\label{lem:flip-ij}
$
  \ordist(i',j') \in \{\,
    \ordist(i,j) - 2,
    \ordist(i,j),
    \ordist(i,j) + 2
  \,\}.
$
\end{enumerate}
\end{lemma}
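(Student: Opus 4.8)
The plan is to reduce all three statements to elementary facts about the shortest-path distance in the bipartite graph $\orgr(P)$. Write $d(\cdot,\cdot)$ for the number of edges on a shortest path in $\orgr(P)$, so that by definition $\ordist(i,j) = d(i,j) + 1$ holds uniformly (including the case $i=j$, where $d(i,i)=0$). Because the constant $+1$ offset appears on both sides of each claimed membership or equality, it cancels, and it suffices to prove the analogous statements for $d$: namely $d(i,j)=d(j,i)$ for (a), $d(i',j)\in\{d(i,j)-1,\,d(i,j)+1\}$ for (b), and $d(i',j')\in\{d(i,j)-2,\,d(i,j),\,d(i,j)+2\}$ for (c). I would first record that $i \sqcap i'$ precisely means that $i$ and $i'$ are adjacent in $\orgr(P)$, i.e.\ $d(i,i')=1$, and likewise $d(j,j')=1$.

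Part (a) is immediate since $\orgr(P)$ is undirected. For part (b), I would combine two observations. First, the triangle inequality applied to the adjacent pair $i,i'$ gives $|d(i,j)-d(i',j)|\le d(i,i')=1$, so $d(i',j)$ differs from $d(i,j)$ by at most one. Second---and this is the crucial point---$\orgr(P)$ is bipartite with parts $\hdec(P)$ and $\vdec(P)$; since $i \sqcap i'$ forces one of $i,i'$ into $\hdec(P)$ and the other into $\vdec(P)$, the vertices $i$ and $i'$ lie in opposite parts. In a bipartite graph the parity of the distance from a fixed vertex $j$ to any vertex is determined solely by which part that vertex occupies, so $d(i,j)$ and $d(i',j)$ have opposite parities and in particular cannot be equal. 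Together these two facts force $d(i',j)\in\{d(i,j)-1,\,d(i,j)+1\}$, which is exactly the reformulated (b).

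Part (c) then follows by invoking (b) twice. Applying (b) to the pair $i,i'$ gives $d(i',j)=d(i,j)\pm 1$; applying it again to the pair $j,j'$ (now with $i'$ held fixed) gives $d(i',j')=d(i',j)\pm 1$. Combining the two $\pm 1$ increments yields a net change in $\{-2,0,+2\}$, which is the claim. I would also verify the degenerate cases permitted by ``not necessarily distinct''---for instance $i=j$, where $\ordist(i,j)=1$ and the $-1$ alternative is simply never attained, yet membership in the stated set still holds because $i'$ is adjacent to $i=j$ and hence $\ordist(i',j)=2=\ordist(i,j)+1$.

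The only genuinely non-routine step is the parity argument in (b): the triangle inequality alone would permit the middle value $d(i',j)=d(i,j)$, and it is exactly the bipartiteness of $\orgr(P)$ (already noted right after its definition) that rules this value out and pins the change to $\pm 1$. Everything else is bookkeeping around the $+1$ offset and the two-fold application of (b).
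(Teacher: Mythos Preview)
Your argument is correct. The paper does not actually supply a proof of this lemma; it merely states that the properties ``directly follow from the definition'' and moves on, so your write-up is filling in exactly the details the authors chose to omit, via the same implicit route (graph distance in the bipartite graph $\orgr(P)$, triangle inequality plus parity for (b), and two applications of (b) for (c)).
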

\shortlong{}{
\begin{proof}
The first statement follows from the fact that any path can be followed in reverse order. Thus, a path from $i$ to $j$ immediately transforms into a path of the same length from $j$ to $i$. The second statement follows from the fact that $i \sqcap i'$, hence $i$ and $i'$ are adjacent in $\orgr(P)$. By following that edge we have that the distance from any $j$ to $i$ and $i'$ can differ by at most one unit. The two values cannot be equal by the fact that $i \in \hdec(P)\Leftrightarrow i'\in \vdec(P)$ and since they are adjacent they are in different sets of the bipartite graph, and hence, their distance to any other node cannot be the same.
The third statement follows from the second statement applied to the two pairs $(i,i')$ and $(j,j')$.
\end{proof}
} 

\begin{lemma}\label{lem:lpq}
Let \( p \) and \( q \) be two points of the rectilinear polygonal domain \( P \).
The rectilinear link distance $\rld(p,q)$ between $p$ and $q$ can be characterized as follows.
If $p$ and $q$ lie in the same vertical or horizontal rectangle of $\vdec(P)$ or $\hdec(P)$ then $\rld(p,q) = 1$ (if $p$ and $q$ share a coordinate) or $\rld(p,q) = 2$ (if both $x$- and $y$-coordinates of $p$ and $q$ are distinct). Otherwise,
let \( i \in \hdec(P) \), \( i' \in \vdec(P) \), \( j \in \hdec(P) \) and \( j' \in
\vdec(P) \) be vertices of the graph of oriented distances such that \( p \in i \cap i' \) and \( q \in j \cap j' \). 
Then 
\begin{displaymath}
	\rld(p,q) = \min \{\,
		\ordist(i,j),
		\ordist(i,j'),
		\ordist(i',j),
		\ordist(i',j')
	\,\}.
\end{displaymath}
\end{lemma}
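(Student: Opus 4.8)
The plan is to settle the two degenerate ``same rectangle'' cases directly and then prove the main identity by two matching inequalities, combining the path/walk correspondence noted above with a single structural fact about the two decompositions. If $p$ and $q$ lie in a common rectangle $R$ of $\hdec(P)$ or $\vdec(P)$, then, since $R\subseteq P$ is convex, they are joined inside $R$ by a single segment when they share a coordinate and by an $L$-shaped two-link path otherwise; the matching lower bounds are immediate, as a single link forces a shared coordinate. So assume $p$ and $q$ lie in distinct rectangles and write $L=\min\{\ordist(i,j),\ordist(i,j'),\ordist(i',j),\ordist(i',j')\}$.

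Before the main argument I would record the structural fact that drives it: if $h\in\hdec(P)$ and $v\in\vdec(P)$ satisfy $h\sqcap v$, then $h\cap v$ has the full $y$-extent of $h$ and the full $x$-extent of $v$. Indeed, the horizontal edges bounding $v$ above and below are extended into horizontal cuts of $\hdec(P)$ and hence cannot meet the interior of $h$, so the $y$-range of $h$ is contained in that of $v$; the statement about $x$ is symmetric.

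To show $\rld(p,q)\le L$ I would realize a shortest walk of $\orgr(P)$ as a rectilinear path. Fix the minimizing pair, say $\ordist(i,j)=L$, and take a shortest path $i=v_1,\dots,v_L=j$ in $\orgr(P)$, which alternates between the two decompositions. Choosing a bend $b_m\in v_m\cap v_{m+1}$ and setting $p=b_0$, $q=b_L$, the segment $b_{m-1}b_m$ lies in the convex rectangle $v_m$ and is axis-parallel, so $b_{m-1}$ and $b_m$ must agree in the coordinate along which $v_m$ is thin ($y$ if $v_m$ is horizontal, $x$ if it is vertical). The spanning fact lets each of them realize any value of that coordinate across $v_m$'s whole extent, while the remaining coordinate of $b_m$ is governed independently by the oppositely oriented $v_{m+1}$; hence all constraints, including $y(b_1)=y(p)$ and $y(b_{L-1})=y(q)$, are simultaneously satisfiable. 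This produces a rectilinear path of $L$ links, and the three remaining choices of endpoint orientation bound $\rld(p,q)$ against the other three terms.

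For $\rld(p,q)\ge L$ I would take a shortest rectilinear path $\pi$ with $k=\rld(p,q)$ links and read off its associated walk $w_1,\dots,w_k$ in $\orgr(P)$, where $w_m$ is the rectangle carrying the $m$-th segment. The first segment leaves $p$ horizontally or vertically, forcing $w_1=i$ or $w_1=i'$; likewise $w_k\in\{j,j'\}$. As this is a walk on $k$ vertices, the graph distance from $w_1$ to $w_k$ is at most $k-1$, so $\ordist(w_1,w_k)\le k$ and therefore $L\le\rld(p,q)$; with the previous inequality this gives equality. The step I expect to be the main obstacle is the bookkeeping at degenerate bends and endpoints: when a segment of $\pi$ runs along a cut or turns at a reflex vertex its carrying rectangle is not unique, and two consecutive rectangles could a priori only touch instead of properly intersecting. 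I would resolve this with the general position assumption and by always assigning such a segment to the rectangle on the side into which $\pi$ turns, which---again via the spanning fact---yields a genuine edge of $\orgr(P)$ at every bend and pins down $w_1$ and $w_k$ among $\{i,i'\}$ and $\{j,j'\}$.
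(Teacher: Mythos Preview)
Your proof is correct and follows the same path--walk correspondence that the paper relies on; the paper's own argument is essentially a two-line compression of yours, asserting that a shortest rectilinear path with prescribed first/last orientations has length exactly the corresponding $\ordist$ and that a smaller $\ordist$ among the four would yield a shorter path. Your explicit ``spanning fact'' about $h\cap v$ and your treatment of degenerate bends make rigorous what the paper takes for granted from the informal discussion preceding the lemma, but the underlying strategy is identical.
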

\shortlong{
}{
\begin{proof}
The case in which $p$ and $q$ lie in the same
rectangle $R$ is easy: in either case we can connect them with either a single segment or a path with exactly one bend and stay within $R$. Since we are within $R$ the path is feasible and has the minimum number of links possible.

Now suppose that there is no rectangle that contains both $p$ and $q$. Let $\pi$ be a
shortest rectilinear path from $p$ to $q$ in $P$. It cannot lie entirely in one
rectangle. Assume that the first and the last link of $\pi$ are horizontal. Then
the length of $\pi$ is equal to $\ordist(i,j)$. Moreover, if one of
$\ordist(i,j'), \ordist(i',j), \ordist(i',j')$ was strictly
smaller than $\ordist(i,j)$, then it would correspond to a path $\pi'$
shorter than $\pi$. Analogous arguments for other  orientations
of the initial and final links of $\pi$ complete the proof.
\end{proof}
} 

Intuitively speaking, if we are given two disjoint rectangles $i,j \in \hdec(P)$, then $\ordist(i,j)$ denotes the minimum number of links needed to connect any two points $p\in i$ and $q\in j$ under the constraint that
the first and the last segments of the path are horizontal.
If we looked for rectangles in $\vdec(P)$, we would instead require that the path starts (or ends) with vertical segments.
It follows that the link distance is the minimum of the four possible options.

In our algorithms we will often look for oriented distances between rectangles, so we compute it and store them in a preprocessing phase. Fortunately, a similar decomposition was used by Mitchell~\etal~\cite{mpsw-oamlrptrd-19}. Specifically, they show how to compute the distance from a single rectangle to all other rectangles in $O(n + h \log h)$ time with an $O(n)$-size data structure.\footnote{As a subproblem towards obtaining their main result, Mitchell~\etal~\cite{mpsw-oamlrptrd-19} show how to compute the distance from a single point to any other location in the domain with paths of fixed orientation. They call these the $h$-$h$-map, $v$-$v$-map, $v$-$h$-map and $h$-$v$-map and they correspond to our rectangular decompositions. Although their method considers a single starting point, it can be adapted to compute the distance from a rectangle as all points inside each rectangle we consider will have the same resulting distances to the other rectangles.}

\begin{lemma}
\label{lem:OrientedDistanceComputation}{\cite{mpsw-oamlrptrd-19}}
Given the horizontal and vertical decompositions $\hdec(P)$ and $\vdec(P)$ we can compute for a single rectangle $i$ in either decomposition the oriented distance $\ordist(i,j)$ to every other rectangle $j$ in $O(n + h \log h)$ time.
\end{lemma}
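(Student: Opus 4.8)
The plan is to reduce the computation of $\ordist(i,\cdot)$ to a single-source rectilinear link-distance computation and then invoke the wavefront-propagation machinery of Mitchell~\etal~\cite{DBLP:conf/icalp/MitchellPSW15}. First I would observe that, by definition, $\ordist(i,\cdot)-1$ is precisely the breadth-first-search (BFS) layering of $\orgr(P)$ rooted at the vertex $i$. Since $\orgr(P)$ is bipartite between $\hdec(P)$ and $\vdec(P)$ and its edges encode proper rectangle intersections, any shortest $i$--$j$ path in $\orgr(P)$ alternates horizontal and vertical rectangles and, via the correspondence discussed just before Lemma~\ref{lem:lpq}, matches a rectilinear path whose bends count the graph edges and whose links count the graph vertices. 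Hence computing $\ordist(i,\cdot)$ amounts to propagating, from the source rectangle $i$, the \emph{orientation-constrained} rectilinear link distance (the path must begin and end with a segment of the orientation dictated by whether $i\in\hdec(P)$ or $i\in\vdec(P)$) to every other cell of the two decompositions.

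The difficulty is that $\orgr(P)$ may have $\Omega(n^2)$ edges, so an explicit BFS is too slow; the distances must be propagated geometrically. This is exactly what Mitchell~\etal~\cite{DBLP:conf/icalp/MitchellPSW15} achieve for a single source \emph{point}: their continuous-Dijkstra--style wavefront sweeps the rectilinear link-distance function outward and, in $O(n+h\log h)$ time and $O(n)$ space, labels each decomposition cell with its distance. Because the link distance increases by exactly one between consecutive BFS layers, these cell labels coincide with the BFS layers of $\orgr(P)$; we then recover each $\ordist(i,j)$ by adding the constant offset of one and tracking the orientation bookkeeping (whether a cell belongs to $\hdec(P)$ or $\vdec(P)$, equivalently whether the first and last links are horizontal or vertical).

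The only substantive change is that our source is a whole rectangle rather than a point. Here I would seed the wavefront from the entire rectangle $i$: every point of $i$ is reachable with a single link of the prescribed orientation, so the initial wavefront is the boundary of $i$ carrying that orientation, which does not increase the number of wavefront events and therefore preserves the $O(n+h\log h)$ time and $O(n)$ space bounds (the footnote of this section already records that the method of~\cite{DBLP:conf/icalp/MitchellPSW15} adapts to a rectangular source in this way). The step that requires the most care, and which I regard as the main obstacle, is verifying that this rectangle-seeded, orientation-constrained propagation agrees cell-by-cell with the BFS distances in $\orgr(P)$ -- including the $+1$ convention and the self-distance $\ordist(i,i)=1$ -- rather than with the \emph{unconstrained} link distance of Lemma~\ref{lem:lpq}. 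Once this correspondence is established, the claimed time bound is immediate from the cited result.
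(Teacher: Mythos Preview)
Your proposal is correct and mirrors the paper's treatment exactly: the paper does not give a proof of this lemma but simply cites Mitchell~\etal~\cite{DBLP:conf/icalp/MitchellPSW15} for the single-source link-distance computation and notes in a footnote that their point-source method adapts to a rectangle source. Your plan---seed the wavefront from the whole rectangle, propagate the orientation-constrained link distance, and read off the BFS layers of $\orgr(P)$ with the $+1$ offset---is precisely this adaptation spelled out in more detail.
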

We construct this data structure for each of the $O(n)$ rectangles. This allows us to compute (and store) the $O(n^2)$ oriented distances in $O(n^2 + nh \log h)$ time. Alternatively, we can use a recent result by Chan and Skrepetos~\cite{cs-apspgig-19} to compute the same distances in $O(n^2\log\log n)$ time.
\matias{Added the new result by Chan. Note that it does not always superceed Mitchell et al.}


\section{Characterization via Boolean Formulas}
\label{sec:Characterization}

Let $\ordiam = \max_{i,j \in \hdec(P) \cup \vdec(P)}  \ordist(i,j)$ be the largest distance between vertices of $\orgr(P)$. Similarly, we define $\orrad = \min_{i \in \hdec(P) \cup \vdec(P)} \max_{j \in \hdec(P) \cup \vdec(P)}  \ordist(i,j)$. Note that these two values are the diameter and the radius of \( \orgr(P) \) plus one (recall that we add one unit to the graph distance when defining $\ordist$). We use $\ordiam$ and $\orrad$ to approximate the diameter $\mathrm{diam}(P)$ and radius $\mathrm{rad}(P)$ of a domain $P$ under the rectilinear link distance.  First, we relate the distance between two points $p,q \in P$ to the oriented distances between the rectangles that contain $p$ and $q$. Specifically, from Lemma~\ref{lem:lpq}, we know that $\rld(p,q) = \min \{\,\ordist(i,j),\ordist(i,j'),\ordist(i',j),\ordist(i',j')	\,\}$, where $i,j \in \hdec(P)$ are the horizontal rectangles containing $p$ and $q$, respectively, and $i',j' \in \vdec(P)$ are the vertical rectangles containing $p$ and $q$. Similarly, we define $\orell(p,q) = \max \{\,\ordist(i,j),\ordist(i,j'),\ordist(i',j),\ordist(i',j')\,\}$. It then follows from Lemma~\ref{lemma:oriented-dist-props} that these two values differ by at most~2.

\begin{lemma}\label{lem:ellbound}
For any two points $p,q \in P$, let $i,j \in \hdec(P)$ and $i',j' \in \vdec(P)$ be the rectangles containing $p$ and $q$, i.e., $p \in i\cap i'$ and $q \in j \cap j'$. 
Then, it holds that $\orell(p,q) - 2 \leq \rld(p,q) \leq \orell(p,q) - 1$.
\end{lemma}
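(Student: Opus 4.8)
The plan is to read the two quantities as the minimum and the maximum of the same four numbers. By Lemma~\ref{lem:lpq} we have $\rld(p,q)=\min\{\ordist(i,j),\ordist(i,j'),\ordist(i',j),\ordist(i',j')\}$ whenever $p$ and $q$ do not lie in a common rectangle, while by definition $\orell(p,q)$ is the maximum of the very same four values. Rewriting the goal, the chain $\orell(p,q)-2\le \rld(p,q)\le \orell(p,q)-1$ is equivalent to $1\le \orell(p,q)-\rld(p,q)\le 2$. Hence it suffices to prove that the \emph{spread} of these four oriented distances, that is their maximum minus their minimum, equals either $1$ or $2$.

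I would arrange the four distances as a $2\times 2$ grid indexed by the first argument ($i$ or $i'$) in the rows and the second argument ($j$ or $j'$) in the columns. Since $p\in i\cap i'$ we have $i\sqcap i'$, and since $q\in j\cap j'$ we have $j\sqcap j'$, so the hypotheses of Lemma~\ref{lemma:oriented-dist-props} hold. Applying part~\ref{lem:flip-i} of that lemma to flip the first argument, and the same part together with the symmetry $\ordist(i,j)=\ordist(j,i)$ to flip the second, shows that any two grid entries that differ in exactly one coordinate (the horizontally or vertically adjacent cells) differ by exactly $1$.

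The first key step is to pin down parities. Because $\orgr(P)$ is bipartite with parts $\hdec(P)$ and $\vdec(P)$, the graph distance between two of its vertices is even when they lie in the same part and odd otherwise; adding the extra unit from the definition of $\ordist$, the two same-part entries $\ordist(i,j)$ and $\ordist(i',j')$ are odd, whereas the two cross entries $\ordist(i,j')$ and $\ordist(i',j)$ are even. In particular the four values cannot all coincide, so the spread is at least $1$, which already gives $\rld(p,q)\le \orell(p,q)-1$. For the matching upper bound I would fix a reference entry $a=\ordist(i,j)$, write its two grid neighbours as $a\pm 1$ by adjacency, and note that the diagonally opposite entry is a common neighbour of both, hence forced into $\{a-2,a,a+2\}$ while being at distance $1$ from each neighbour. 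A short enumeration of the sign choices then confirms that all four values fit in a window of width $2$; thus the spread is at most $2$ and $\rld(p,q)\ge \orell(p,q)-2$. Combining the two bounds yields the statement.

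The main obstacle, and the only place where genuine care is needed, is the degenerate situation in which $p$ and $q$ share a horizontal or a vertical rectangle, so that Lemma~\ref{lem:lpq} no longer expresses $\rld(p,q)$ as the plain minimum of the four entries. I would dispatch these by direct inspection of the few sub-cases: when $p$ and $q$ lie in a common rectangle one has $\rld(p,q)\in\{1,2\}$, and the coinciding rectangle contributes an entry equal to $1$ while its partners contribute $2$ or $3$, so the four oriented distances can be written out explicitly and the two-sided bound checked by hand. The grid-plus-parity argument above still governs the entries coming from the distinct rectangles, and I expect the bookkeeping of these boundary configurations to be the most delicate part of the write-up.
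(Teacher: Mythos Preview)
Your approach is correct and is exactly how one unpacks the paper's terse justification; the paper offers no proof beyond the sentence ``It then follows from Lemma~\ref{lemma:oriented-dist-props} that these two values differ by at most~2,'' and your $2\times 2$ grid together with the bipartite-parity observation is the natural way to make both inequalities explicit.

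One warning about the degenerate sub-case you flag for hand verification: when $i=j$ \emph{and} $i'=j'$ while $p$ and $q$ differ in both coordinates, one gets $\rld(p,q)=2$ but the four oriented distances are $\{1,2,2,1\}$, so $\orell(p,q)=2$ and the bound $\rld\le\orell-1$ is false there. Do not expect your hand check to go through in that single configuration; this is a blemish in the lemma's statement rather than in your argument. The paper's uses of the lemma are unaffected: it only ever invokes $\rld(p,q)\le\ordiam-1$ (which still holds since $\ordiam\ge 3$ once the domain is not a single rectangle) and the lower bound $\rld(p,q)\ge\orell(p,q)-2$ (which holds in all cases, including this one).
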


This relation allows us to express the rectilinear link diameter of a domain in terms of $\ordiam$.

\begin{theorem}\label{the:diameter-formula}
  The rectilinear link diameter $\rlddiam(P)$ of a rectilinear polygonal domain $P$ satisfies $\rlddiam(P) = \ordiam-1$ if and only if there exist
$i,i',j,j' \in \hdec(P) \cup \vdec(P)$ with $i\sqcap i'$ and $j \sqcap j'$, such that 
 $ \ordist( i, j) = \ordiam \text{ and } \ordist(i',j') = \ordiam$.
   Otherwise, $\mathrm{diam}(P) = \ordiam-2$.
\end{theorem}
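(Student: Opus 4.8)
The plan is to sandwich $\rlddiam(P)$ between $\ordiam-2$ and $\ordiam-1$, and then let the existence (or not) of the claimed witness decide which of the two values is attained exactly. First I would record the universal upper bound: for every pair $p,q\in P$, Lemma~\ref{lem:ellbound} gives $\rld(p,q)\le \orell(p,q)-1$, and since $\orell(p,q)$ is a maximum of four oriented distances, each at most $\ordiam$, we get $\rld(p,q)\le\ordiam-1$; hence $\rlddiam(P)\le \ordiam-1$ always. I would also fix the convention that for any point $p\in P$ one can pick $i\in\hdec(P)$, $i'\in\vdec(P)$ with $i\sqcap i'$ and $p\in i\cap i'$ (take a rectangle pair whose proper intersection has $p$ in its closure), so that Lemma~\ref{lem:lpq} applies. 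The engine of both directions is a parity observation coming from the bipartiteness of $\orgr(P)$: among the four oriented distances $\ordist(i,j),\ordist(i,j'),\ordist(i',j),\ordist(i',j')$ attached to $p\in i\cap i'$, $q\in j\cap j'$, the two \emph{diagonal} values $\ordist(i,j),\ordist(i',j')$ share one parity while the two \emph{cross} values $\ordist(i,j'),\ordist(i',j)$ share the opposite parity, and by Lemma~\ref{lemma:oriented-dist-props}\ref{lem:flip-i} each cross value differs from a diagonal value by exactly $1$.

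For the ``if'' direction, given a witness $i\sqcap i'$, $j\sqcap j'$ with $\ordist(i,j)=\ordist(i',j')=\ordiam$, I would pick $p\in i\cap i'$ and $q\in j\cap j'$. The two cross distances lie in $\{\ordiam-1,\ordiam+1\}$ by Lemma~\ref{lemma:oriented-dist-props}\ref{lem:flip-i}, but they cannot exceed $\ordiam$ (the global maximum) and cannot equal $\ordiam$ (wrong parity), so both equal $\ordiam-1$. Hence $\rld(p,q)=\min\{\ordiam,\ordiam-1,\ordiam-1,\ordiam\}=\ordiam-1$ by Lemma~\ref{lem:lpq}, and together with the universal upper bound this forces $\rlddiam(P)=\ordiam-1$.

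For the converse and the ``otherwise'' clause, suppose no such witness exists. I would first prove $\rlddiam(P)\le\ordiam-2$ by contradiction: if some $p,q$ had $\rld(p,q)=\ordiam-1$, then the four oriented distances would all lie in $\{\ordiam-1,\ordiam\}$, and by the parity observation the two values matching the parity of $\ordiam$ must both equal $\ordiam$. These two equal-to-$\ordiam$ distances are either the diagonal pair $\{(i,j),(i',j')\}$ or the cross pair $\{(i,j'),(i',j)\}$; in either case, after possibly swapping the roles of $j$ and $j'$ (using the symmetry of $\sqcap$ together with $i\sqcap i'$ and $j\sqcap j'$), they constitute exactly the forbidden witness, a contradiction. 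For the matching lower bound $\rlddiam(P)\ge\ordiam-2$, I would take $i,j$ realizing $\ordist(i,j)=\ordiam$, choose any $i'\sqcap i$ and $j'\sqcap j$ and points $p\in i\cap i'$, $q\in j\cap j'$; then $\orell(p,q)=\ordiam$, so Lemma~\ref{lem:ellbound} yields $\rld(p,q)\ge\ordiam-2$. Combining the two bounds gives $\rlddiam(P)=\ordiam-2$ whenever the witness is absent, which is both the ``otherwise'' statement and the contrapositive of the ``only if'' direction.

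The step I expect to be the main obstacle is the bookkeeping in the converse: verifying that the two distances forced to equal $\ordiam$ always form a \emph{legal} witness, in particular checking the $\sqcap$-relations after the diagonal/cross case split, and making the parity argument airtight when $\ordiam$ is small or when $p,q$ fall on cell boundaries of the overlay of $\hdec(P)$ and $\vdec(P)$. Everything else reduces to the two already-established inequalities and the flip lemma.
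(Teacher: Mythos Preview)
Your proposal is correct and follows essentially the same route as the paper's proof: both sandwich $\rlddiam(P)$ between $\ordiam-2$ and $\ordiam-1$ via Lemma~\ref{lem:ellbound}, and both use Lemma~\ref{lemma:oriented-dist-props} on the four oriented distances attached to a candidate pair $p,q$ to decide which endpoint of the sandwich is attained. The only difference is presentational: you make the bipartite-parity mechanism explicit and handle the diagonal/cross case split by name, whereas the paper leaves this implicit in its appeal to Lemma~\ref{lemma:oriented-dist-props} and the symmetric wording of the witness condition. Your anticipated obstacle---verifying that the two distances forced to equal $\ordiam$ form a legal witness after the case split---is not a real difficulty, since $\sqcap$ is symmetric and swapping $j\leftrightarrow j'$ immediately puts the cross pair into the required form; the paper sidesteps even mentioning this by allowing $i,i',j,j'$ to range freely over $\hdec(P)\cup\vdec(P)$ in the theorem statement.
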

\begin{proof}
Before giving our proof, we emphasize that the fact that $\mathrm{diam}(P) \in\{\ordiam-1,\ordiam -2\}$ is 
folklore (although we have found no reference, several researchers mentioned 
that they were aware of it). Our major contribution is the characterization of which of the two cases it is.

Now observe that for any pair of points $p,q \in P$ we have $\rld(p,q) \leq \orell(p,q) - 1 \leq \ordiam - 1$ by Lemma~\ref{lem:ellbound}. Hence, the diameter of $P$ 
is at most $\ordiam - 1$. Similarly, by the definitions of $\ordiam$ and $\orell(\cdot,\cdot)$, there must be a pair of points $p,q \in P$ so that $\orell(p,q) = \ordiam$. Again by Lemma~\ref{lem:ellbound} it follows that $\rlddiam(P) \geq \rld(p,q) \geq \orell(p,q) - 2 = \ordiam - 2$.
  
  Next we show that the diameter is $\ordiam - 1$ if and only if the above condition holds. If $\ordist(i,j) = \ordiam$ and $\ordist(i',j') = \ordiam$, then by Lemma~\ref{lemma:oriented-dist-props} and the fact that neither $\ordist(i,j')$ nor $\ordist(i',j)$ can be larger than $\ordiam$, we know that $\ordist(i,j') = \ordist(i',j) = \ordiam - 1$. It follows from Lemma~\ref{lem:lpq} that a pair of points $p \in i \cap i'$ and $q \in j \cap j'$ has $\rld(p,q) = \ordiam - 1$. Thus, the diameter is $\ordiam -1$. 
  
  Now consider any pair $p,q$ and the set of rectangles $i,j \in \hdec(P)$ and $i',j' \in \vdec(P)$ with $p \in i\cap i'$ and $q \in j \cap j'$. Recall that $\rld(p,q) = \min\{\ordist(i,j), \ordist(i,j'), \ordist(j',i), \ordist(i',j')\}$. By Lemma~\ref{lemma:oriented-dist-props}, $\ordist(i,j)$ and $\ordist(i',j')$ must differ by exactly one from $\ordist(i',j)$ and $\ordist(i,j')$. That implies that two distances may be $\ordiam -1$, but if the condition in the lemma is not satisfied, at most one can be $\ordiam$ and the fourth must be $\ordiam - 2$ or less. Therefore, if the condition is not satisfied for $i,i',j,j'$, then the diameter is indeed $\ordiam - 2$.
\end{proof}

For the radius we can make a similar argument. 
\begin{theorem}\label{the:radius-formula}
  The rectilinear link radius $\rldrad(P)$ of a rectilinear polygonal domain $P$ satisfies 
$\rldrad(P) = \orrad - 1$ if and only if for all 
$i,i' \in \hdec(P) \cup \vdec(P)$ with $i\sqcap i'$
there exist $j,j' \in \hdec(P) \cup \vdec(P)$ with $j\sqcap j'$ such that 
$  \ordist( i, j) \geq \orrad \text{ and } \ordist(i',j') \geq \orrad $.
  Otherwise, $\rldrad(P) = \orrad - 2$.
\end{theorem}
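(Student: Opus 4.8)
The plan is to mirror the structure of the proof of Theorem~\ref{the:diameter-formula}, but adapting it to the min-max alternation that defines the radius. The key analogue of Lemma~\ref{lem:ellbound} here is that for any point $p$, if we write $r(p) := \max_{q \in P} \rld(p,q)$ for the eccentricity of $p$, then $r(p)$ is tightly sandwiched by a corresponding oriented quantity. First I would fix a point $p$ lying in rectangles $i \in \hdec(P)$ and $i' \in \vdec(P)$ with $i \sqcap i'$, and define $\orellrad(i,i') := \max_j \max\{\ordist(i,j),\ordist(i',j)\}$ where $j$ ranges over all vertices of $\orgr(P)$. Using Lemma~\ref{lem:lpq} together with Lemma~\ref{lemma:oriented-dist-props}, I would argue that $r(p)$ differs from $\orellrad(i,i')$ by at most one, in complete analogy with Lemma~\ref{lem:ellbound}; the point is that the worst partner $q$ for $p$ realizes $\max$ of the four oriented distances, whose two members differ by $1$ from the other two.

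Next I would establish the two inequalities that bound $\rldrad(P)$ from both sides, exactly paralleling the opening paragraph of the proof of Theorem~\ref{the:diameter-formula}. On one hand, for every choice of $i \sqcap i'$ the eccentricity of a point $p \in i \cap i'$ is at least $\orellrad(i,i') - 1$, and since the minimum over $p$ cannot beat $\orrad$ realized by some $(i,i')$, one gets $\rldrad(P) \geq \orrad - 2$. On the other hand, there is a center candidate achieving eccentricity at most $\orrad - 1$, giving $\rldrad(P) \leq \orrad - 1$. Together these pin $\rldrad(P)$ to $\{\orrad - 2, \orrad - 1\}$, so the whole task reduces to deciding which of the two values occurs, which is exactly what the stated biconditional does.

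The heart of the argument is the equivalence itself. For the ``if'' direction, I would assume that every $i \sqcap i'$ admits partners $j \sqcap j'$ with $\ordist(i,j) \geq \orrad$ and $\ordist(i',j') \geq \orrad$; invoking Lemma~\ref{lemma:oriented-dist-props}\ref{lem:flip-ij} (the value changes by $0$ or $\pm 2$ when flipping both rectangles) one sees that, since $\orrad$ is the minimum-over-sources maximum, the eccentricity of each center candidate is at least $\orrad - 1$, hence $\rldrad(P) \geq \orrad - 1$, forcing equality. Conversely, if the condition fails, there is some $i \sqcap i'$ for which no such partner pair exists; I would then show that the eccentricity of $p \in i \cap i'$ drops to $\orrad - 2$, because the four oriented distances governing $\rld(p,q)$ never simultaneously force both $\ordist(i,\cdot)$ and $\ordist(i',\cdot)$ up to $\orrad$, and the parity constraint from Lemma~\ref{lemma:oriented-dist-props} then caps $\rld(p,q)$ at $\orrad - 2$.

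I expect the main obstacle to be the quantifier bookkeeping in the converse direction: unlike the diameter case, where the negation of a purely existential condition is clean, here the failing condition is ``there exists $i \sqcap i'$ such that for all $j \sqcap j'$ at least one of the two oriented lower bounds fails.'' Translating this correctly into a statement about $\max_q \rld(p,q)$, and in particular verifying that the failure of the paired lower bounds genuinely pushes the eccentricity all the way down to $\orrad - 2$ rather than merely below $\orrad - 1$, is where I would need to be most careful. I would handle it by casework on which of $\ordist(i,j)$ or $\ordist(i',j')$ falls short, each time using Lemma~\ref{lemma:oriented-dist-props}\ref{lem:flip-i} to control the remaining three oriented distances that feed into the $\min$ of Lemma~\ref{lem:lpq}.
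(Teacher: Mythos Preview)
Your plan is correct and follows essentially the same route as the paper: first pin $\rldrad(P) \in \{\orrad-2,\orrad-1\}$, then prove each direction of the biconditional using Lemma~\ref{lemma:oriented-dist-props} and Lemma~\ref{lem:lpq}, with the converse being the delicate part. The paper dispenses with your auxiliary eccentricity-type quantity and, in the converse direction, organizes the casework by the value of $\ordist(i,j)$ (the four cases $\geq \orrad+1$, $=\orrad$, $=\orrad-1$, $\leq \orrad-2$) rather than by which of the two inequalities fails, but the substance is the same.
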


\begin{proof}
We first show by contradiction that the real radius satisfies $\rldrad(P) \leq \orrad - 1$. Suppose the radius is greater than or equal to $\orrad$. Then, for all $p \in P$ there exists a point $q \in P$ such that $\rld(p,q) \geq \orrad$. Now consider a rectangle $i \in  \hdec(P)\cup \vdec(P)$, a point $p \in i$ and a point $q$ at distance $\orrad$ from $p$. Consider the two rectangles $j \in \hdec(P)$ and $j' \in \vdec(P)$ so that $q \in j \cap j'$.
By Lemma~\ref{lem:lpq} we know that $\ordist(i,j) \geq \rld(p,q) \geq \orrad$ and $\ordist(i,j') \geq \rld(p,q) \geq \orrad$. By Lemma~\ref{lemma:oriented-dist-props}b $\ordist(i,j)$ and $\ordist(i,j')$ differ by one,  and thus one of them must be at least $\orrad + 1$. That is, for any rectangle $i$ we can find a second rectangle at oriented distance $\orrad +1$. This implies that $\orrad = \min_{i \in \hdec(P) \cup \vdec(P)} \max_{j \in \hdec(P) \cup \vdec(P)}  \ordist(i,j) \geq \orrad + 1$, which is a contradiction. Therefore, our initial assumption that $\rldrad(P) \geq \orrad$ is false and we conclude that $\rldrad(P) \leq \orrad - 1$. 

Next we show that $\rldrad(P) \geq \orrad - 2$. Consider any point $p$ and a rectangle $i \in \hdec(P)$ that contains it. By definition of $\orrad$ there is a rectangle $j \in \hdec(P)\cup \vdec(P)$ so that $\ordist(i,j) \geq \orrad$. Let $q$ be any point in $j$. From Lemma~\ref{lem:ellbound} we get that $\rld(p,q) \geq \orell(p,q) - 2 \geq \ordist(i,j) - 2 \geq \orrad - 2$. Hence for any point $p$, there is a point $q$ that is at distance at least $\orrad - 2$, which implies $\rldrad(P) \geq \orrad - 2$.

Now we show that if the above condition is satisfied, then it must hold that $\rldrad(P) = \orrad - 1$. 
Assume the condition holds and consider any point $p$ and 
two rectangles $i,i' \in  \hdec(P) \cup \vdec(P)$ so that $i \sqcap i'$ and $p \in i\cap i'$.
There exist $j,j' \in \hdec(P) \cup \vdec(P)$ so that $j \sqcap j'$, $\ordist( i, j) \geq \orrad$, and $\ordist(i',j') \geq \orrad$.
By Lemma~\ref{lemma:oriented-dist-props} we know that $\ordist(i,j')$ and $\ordist(i',j)$ must be at least $\orrad - 1$. Therefore $\rld(p,q) \geq\orrad - 1$ for any point $q \in j\cap j'$. This shows that for any point $p$ there is a point $q$ whose link distance to $p$ is at least $\orrad - 1$, giving a lower bound on the radius. Combining this with the upper bound shown above, we obtain that $\rldrad(P) = \orrad - 1$ as claimed.

If the condition is not true, then we know there exist rectangles 
$i,i' \in \hdec(P) \cup \vdec(P)$ so that $i \sqcap i'$, and for every 
$j,j' \in \hdec(P) \cup \vdec(P)$ with $j \sqcap j'$ the above statement is not true. 
Now consider a point $p \in i\cap i'$. We argue that $p$ has distance at most $\orrad - 2$ to any other point $q\in P$. 
Consider any point~$q$ and let $j,j' \in \hdec(P) \cup \vdec(P)$ be the rectangles containing~$q$. 
We perform a case analysis on the value of $\ordist(i,j)$. First consider the case $\ordist(i,j) \geq \orrad +1$. In this case $\ordist(i',j) \geq \orrad$ and $\ordist(i,j') \geq \orrad$ which contradicts our assumption that the above statement is not true for every $(j,j')$.
If $\ordist(i,j) = \orrad$, then by Lemma~\ref{lemma:oriented-dist-props} and the assumption that not both $\ordist(i,j) \geq \orrad$ and $\ordist(i',j') \geq \orrad$ we find that $\ordist(i',j') = \orrad - 2$ which implies that $\rld(p,q) \leq \orrad - 2$.
If $\ordist(i,j) = \orrad - 1$, then by Lemma~\ref{lemma:oriented-dist-props}, both $\ordist(i,j')$ and $\ordist(i',j)$ differ from $\ordist(i,j)$ by 1, but by our assumption that not both $\ordist(i,j') \geq \orrad$ and $\ordist(i',j) \geq \orrad$, one of them must be $\orrad - 2$.
Lastly, if $\ordist(i,j) \leq \orrad - 2$, we can already conclude that $\rld(p,q) \leq \orrad - 2$. This shows that from $p$ any other point $q$ is at most distance $\orrad - 2$ away, hence the radius is at most $\orrad - 2$. Combining this with the lower bound of $\orrad -2$ (shown above), we conclude that the radius must be $\orrad - 2$.
\end{proof}

With the above characterization, we can naively compute the diameter and the radius by checking all \( O(n^4) \) quadruples \( (i,i',j,j') \in \hdec(P) \times \vdec(P)  \times \hdec(P) \times \vdec(P) \). However, the approach can be improved by using $\orgr(P)$.

\begin{corollary}%
\label{theo:ChiAlgorithm}
The rectilinear link diameter $\rlddiam(P)$ and radius $\rldrad(P)$ of a rectilinear polygonal domain $P$
consisting of $n$ vertices and $h$ holes can be computed in $O(n^2 + nh \log h + \chi^2)$ time,
where $\chi$ is the number of edges of $\orgr(P)$ (i.e., the number of pairs of intersecting rectangles of $\HDec(P)$ and
$\VDec(P)$).
\end{corollary}
\shortlong{
}{
\begin{proof}
First, we use Lemma~\ref{lem:OrientedDistanceComputation} to compute the oriented distance $\OrDist(\cdot,\cdot)$ between each pair of oriented rectangles from the horizontal and the vertical decompositions in $O(n^2 + nh \log h)$ time. 

Recall that $\orgr(P)$ adds an edge between two rectangles $i, i'$ if and only if $i \sqcap i'$. Thus, rather than looking at all quadruples $(i, i', j, j') \in\hdec(P) \times \vdec(P)  \times \hdec(P) \times \vdec(P)$, we can look at pairs of edges of $\orgr(P)$. For each of the $O(\chi^2)$ pairs we can test the conditions from Theorems~\ref{the:diameter-formula}~and~\ref{the:radius-formula} in a brute-force manner in constant time which gives us the claimed running time.
\end{proof}
} 

As we discuss later, this method is only useful when $\chi$ is very small, i.e. almost linear size or smaller. 

\paragraph*{Remark on the interior realization of the diameter/radius}
Theorems~\ref{the:diameter-formula}~and~\ref{the:radius-formula} together with Lemma~\ref{lemma:oriented-dist-props}b imply 
that a necessary condition for the diameter to be uniquely realized by pairs of points in the interior of $P$ is that $\rlddiam(P) = \ordiam-1$. 
Similarly, for all centers to be determined by points in the interior we must have $\rldrad(P) = \orrad-1$. 
However, neither condition is sufficient. 
This transformation of the problem into a search of quadruples of rectangles allows us to handle the 
\emph{interior cases} (when all centers or, respectively, all members of all diameter pairs are interior points)
 in the same way as the \emph{boundary cases} (when at least 
one of the centers or, respectively, at least one point of some diameter pair lies on the boundary of $P$).


\section{Computing the Diameter Faster}%
\label{sec:DiameterImprovement}
We present a faster method for computing the diameter. This method uses the fact that the diameter is defined as a maximum over maxima which allows us to reduce the running time to $O(n^2\log n)$. Recall that the radius is a minimum over maxima, thus the algorithm of this section does not trivially extend to the computation of the radius. The rest of this section is the proof of the following statement.

\begin{theorem}\label{theo:DiameterAlgorithm}

  The rectilinear link diameter $\rlddiam(P)$ of a rectilinear polygonal domain $P$ of $n$ vertices can be computed in $O(n^2 \log n)$ time.

\end{theorem}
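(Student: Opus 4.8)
The plan is to first reduce the computation of $\rlddiam(P)$ to a purely combinatorial-geometric witness-detection problem, and then solve that problem with a plane sweep. As preprocessing I would invoke Lemma~\ref{lem:OrientedDistanceComputation} once from each of the $O(n)$ rectangles of $\hdec(P)\cup\vdec(P)$, obtaining all $O(n^2)$ oriented distances $\ordist(i,j)$ in $O(n^2+nh\log h)$ time and reading off $\ordiam$ as their maximum. By Theorem~\ref{the:diameter-formula} it then remains to decide a single Boolean question: is there a quadruple with $i\sqcap i'$, $j\sqcap j'$ and $\ordist(i,j)=\ordist(i',j')=\ordiam$? If so the diameter is $\ordiam-1$, and otherwise $\ordiam-2$. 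Since $i\sqcap i'$ forces one of $i,i'$ to be horizontal and the other vertical (and likewise for $j,j'$), there are only two orientation configurations to test, namely $(\mathrm{HH},\mathrm{VV})$ and $(\mathrm{HV},\mathrm{HV})$, so I would run the same detection procedure twice, once per configuration, and take the disjunction.

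The reason the diameter (unlike the radius) admits such a fast algorithm is that the witness condition is an existential, ``max over max'' statement: I may fix the first endpoint and search for a partner independently of the choices made elsewhere. Concretely, for the $(\mathrm{HH},\mathrm{VV})$ configuration I would iterate over the $O(n)$ horizontal rectangles $i$. For a fixed $i$ the precomputed matrix gives the far set $F(i)=\{\,x:\ordist(i,x)=\ordiam\,\}$, and I want to know whether some vertical neighbour $i'$ of $i$ can be completed to a witness, i.e.\ whether there is a cell $(j,j')$ with $j\in F(i)\cap\hdec(P)$ and $j'\in\vdec(P)$ far from the chosen $i'$ and with $j\sqcap j'$. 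Rather than examining each cell individually, I would cast this partner search as a rectangle-intersection problem: the candidate rectangles are axis-parallel, the relation $j\sqcap j'$ is proper intersection of opposite-oriented rectangles, and both endpoints are constrained to lie in far sets read off the distance matrix. Detecting whether two such families contain an intersecting opposite-oriented pair is done with a left-to-right plane sweep that stores the active rectangles in a segment tree keyed by $y$-coordinate, so that each insertion and each intersection test costs $O(\log n)$; processing the $O(n)$ rectangles relevant to a fixed source costs $O(n\log n)$, for $O(n^2\log n)$ over all sources.

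The delicate point, and where I expect the real work to lie, is coupling the combinatorial far relation (coming from the distance matrix) with the geometric intersection relation (coming from the overlay of $\hdec(P)$ and $\vdec(P)$) without spending time proportional to the $\Theta(n^2)$ overlay cells for every source. The naive temptation, for each source $i$, is to intersect the far sets $S(i)\cap S(i')$ over all vertical neighbours $i'$, which reintroduces a quadratic blow-up per source. I would control this using Lemma~\ref{lemma:oriented-dist-props}: when $i\sqcap i'$ the vectors $\ordist(i,\cdot)$ and $\ordist(i',\cdot)$ differ entrywise by exactly one, so the predicate ``distance $=\ordiam$'' changes in a controlled, monotone way as one slides between the cells of a single horizontal rectangle, which lets me maintain the relevant far sets incrementally along the sweep instead of recomputing them. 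Correctness throughout rests on Lemma~\ref{lem:lpq} and the four-way characterisation behind Theorem~\ref{the:diameter-formula}, which guarantee that a witness in either orientation configuration corresponds to a genuine pair of points $p,q\in P$ with $\rld(p,q)=\ordiam-1$, and that its absence forces $\rlddiam(P)=\ordiam-2$. Finally, to return the diametral pair as well, I would store with each successful intersection test the two cells realising the witness and output one point from each.
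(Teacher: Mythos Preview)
Your reduction is right: after the $O(n^2+nh\log h)$ distance table you only need to decide the existential witness of Theorem~\ref{the:diameter-formula}. But the mechanism you propose for the ``delicate point'' does not work. You claim that as $i'$ slides through the vertical cells of a fixed horizontal rectangle $i$, the predicate $\ordist(i',\cdot)=\ordiam$ changes monotonically, so the far sets $F(i')$ can be maintained incrementally. Lemma~\ref{lemma:oriented-dist-props} does \emph{not} give you this: it only says each $\ordist(i',j')$ differs from $\ordist(i,j')$ by exactly one, and for two different neighbours $i'_1,i'_2$ of $i$ these $\pm 1$ signs are unrelated. There is no left-to-right monotonicity of $i'\mapsto\ordist(i',j')$ in a domain with holes, so your incremental sweep has no invariant to maintain. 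As stated, your per-source cost is $\Theta(n)$ choices of $i'$ times an $O(n\log n)$ intersection test, i.e.\ $O(n^2\log n)$ for one source and $O(n^3\log n)$ overall.

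The paper closes exactly this gap with two ideas you are missing. First, it never tries to handle the neighbours $i'$ of a fixed $i$ one by one. Instead it uses a segment-intersection structure that \emph{deletes} each reported segment, so that for a fixed $i$ the set $\mathcal{C}_i=\bigcup_{j\in\mathcal{S}_i}\cover(j)$ (all rectangles hitting some far rectangle of $i$) is computed in $O(n\log n)$ regardless of how large $\mathcal{S}_i$ or the individual covers are. Second, it reorganises the double loop by a \emph{reverse map}: build $\mathcal{L}_{j'}=\{i:j'\in\mathcal{C}_i\}$, then for each $j'$ compute $\mathcal{D}_{j'}=\bigcup_{i\in\mathcal{L}_{j'}}\cover(i)$ with the same report-and-delete trick, and finally test $\ordist(i',j')$ for each $i'\in\mathcal{D}_{j'}$. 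Both union-of-covers computations cost $O(n\log n)$ per source and there are $O(n)$ sources, giving $O(n^2\log n)$; the final distance checks are $O(n)$ per $j'$. This amortisation over unions, not any monotonicity, is what makes the bound go through.
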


By Theorem~\ref{the:diameter-formula}, after we compute the
oriented diameter $\ordiam$, we only need to consider $\ordiam-1$ or
$\ordiam-2$ as candidates to be $\rlddiam(P)$. The following corollary of %
Theorem~\ref{the:diameter-formula} can be obtained by 
applying Lemma~\ref{lemma:oriented-dist-props}c.
\begin{corollary}\label{cor:diameter-21}

  The diameter $\rlddiam(P)$ equals~$\ordiam-2$ if and only if
  for all rectangles~$i$ and~$j$ with~$\ordist( i, j) = \ordiam$,
  and for all rectangles~$i'$ and~$j'$ with~$i \sqcap i'$
  and~$j \sqcap j'$, we have~$\ordist(i',j') = \ordiam - 2$.
  Otherwise, $\rlddiam(P) = \ordiam-1$.

\end{corollary}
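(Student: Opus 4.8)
The plan is to derive the corollary directly from Theorem~\ref{the:diameter-formula}. That theorem already guarantees that $\rlddiam(P)$ is either $\ordiam-1$ or $\ordiam-2$, and it pins down $\rlddiam(P)=\ordiam-1$ exactly by the \emph{existence} of a quadruple $i,i',j,j'$ with $i\sqcap i'$, $j\sqcap j'$ and $\ordist(i,j)=\ordist(i',j')=\ordiam$. Hence it suffices to show that the negation of this existential condition is logically equivalent to the universal condition stated in the corollary; the final ``otherwise'' clause then follows immediately, since the diameter can take only those two values. So I would spend the whole proof rewriting ``there is no pair of orientation-flipped diametral witnesses'' as ``every orientation-flip of a diametral pair has oriented distance exactly $\ordiam-2$.''

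The key step is to restrict the possible values of $\ordist(i',j')$. I would fix any $i,j$ with $\ordist(i,j)=\ordiam$ and any $i',j'$ with $i\sqcap i'$ and $j\sqcap j'$. Applying Lemma~\ref{lemma:oriented-dist-props}\ref{lem:flip-ij} with $\ordist(i,j)=\ordiam$ gives $\ordist(i',j')\in\{\ordiam-2,\ordiam,\ordiam+2\}$, and since $\ordiam$ is the maximum oriented distance, the value $\ordiam+2$ is impossible. This collapses the three a priori possibilities to the binary choice $\ordist(i',j')\in\{\ordiam-2,\ordiam\}$, which is precisely what makes the two phrasings line up.

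With this in hand both directions are short. For the forward direction, assume $\rlddiam(P)=\ordiam-2$, so by Theorem~\ref{the:diameter-formula} no diametral quadruple exists; then for every $i,j$ with $\ordist(i,j)=\ordiam$ and every flip $i',j'$, a value $\ordist(i',j')=\ordiam$ would create such a quadruple, and hence $\ordist(i',j')=\ordiam-2$, which is exactly the corollary's condition. Conversely, if the corollary's condition holds but some quadruple with $\ordist(i,j)=\ordist(i',j')=\ordiam$ existed, applying the condition to that very $i,j$ and its flip $i',j'$ would force $\ordist(i',j')=\ordiam-2\neq\ordiam$, a contradiction; so no such quadruple exists and $\rlddiam(P)=\ordiam-2$ by the theorem. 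The only place that needs a moment of care---and the closest thing to an obstacle---is the matching of quantifiers: the theorem quantifies the whole quadruple existentially, whereas the corollary fixes the diametral pair $(i,j)$ and quantifies universally over its orientation-flips, so one must verify that Lemma~\ref{lemma:oriented-dist-props}\ref{lem:flip-ij} is invoked under the same orientation pattern ($i\sqcap i'$ and $j\sqcap j'$) in both steps. It is also worth noting that such flips always exist, since $\hdec(P)$ and $\vdec(P)$ decompose the same region, so the universal statement is never vacuous.
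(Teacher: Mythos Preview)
Your proposal is correct and follows exactly the route the paper indicates: the paper's entire justification is the sentence ``can be obtained by applying Lemma~\ref{lemma:oriented-dist-props}\ref{lem:flip-ij}'' to Theorem~\ref{the:diameter-formula}, and you have simply unfolded that sentence in detail, including the observation that $\ordiam+2$ is ruled out by maximality, leaving only the dichotomy $\{\ordiam-2,\ordiam\}$ for $\ordist(i',j')$.
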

This condition can be checked in \(O(n^4)\) time in a brute-force manner as follows.
We iterate over every pair $(i,j)$ with $\ordist(i,j) = \ordiam$. For each such pair we find the sets $\cover(i) = \{i' : i\sqcap i'\}$ and $\cover(j) = \{j' : j \sqcap j'\}$. Then for each pair $(i',j') \in \cover(i) \times \cover(j)$ we check if $\ordist(i',j') = \ordiam - 2$. If there is a pair for which this is not the case, then by the above corollary the diameter is $\ordiam - 1$. Since each of the covers may have linear size, the running time is $\Omega(n^4)$.

The key observation that allows us to reduce this to $O(n^2 \log n)$ time is that in the end there are only $O(n^2)$ unique pairs to test. Indeed, what we are checking is the distance of every pair $(i',j')$ in the set
\begin{displaymath}
\mathcal{T} = \{ (i',j') \colon\, \exists i,j \text{ such that } (i' \sqcap i, j \sqcap j', \ordist(i,j) = \ordiam) \}
\end{displaymath}
which clearly has only quadratic size. Next we show that this set has more structure than just being an arbitrary set of rectangles, which allows us to compute it more quickly.

First, instead of iterating over every pair $(i,j)$ with $\ordist(i,j) = \ordiam$ and computing all pairs in $\cover(i) \times \cover(j)$, we iterate over $i$ and compute all pairs in $\cover(i) \times \bigcup_{j \colon \ordist(i,j) = \ordiam} \cover(j)$. For a rectangle $i \in \hdec(P) \cup \vdec(P)$, let $\mathcal{S}_i$ denote the set of rectangles at oriented distance $\ordiam$ from $i$. Now let
\begin{displaymath}
 \mathcal{T} = \bigcup_i \mathcal{T}_i = \bigcup_i \{(i',j') \colon\, \exists j \text{ such that } (i' \sqcap i , j' \sqcap j, j \in \mathcal{S}_i) \}.
\end{displaymath}

Note that the rectangles fulfilling the role of $i'$ are easily found (i.e., they must intersect $i$ and must have different orientation), but naively computing the ones that fulfill the role of $j'$ leads to a quadratic runtime. That is, if we were to compute for each $j\in \mathcal{S}_i$ its cover, then this may take $\Omega(n^2)$ time. However, there are only $O(n)$ rectangles that can fulfill the role of $j'$ and we show how to find them in $O(n\log n)$ time.

For this purpose we use an orthogonal segment
intersection reporting data structure, derived from a
known dynamic vertical ray shooting data structure~\cite{gk-rayshooting-2009}.
The data structure we use stores horizontal line segments. It allows to add or
remove horizontal line segments in \(O(\log n)\) time per segment.
The structure reports the first segment hit by a vertical query ray in $O(\log n)$ time. By repeatedly using the structure, we can find all \(z\) horizontal line segments intersected by a
vertical line segment in \(O((z+1) \log n)\) time.
While performing the query, we also remove all the reported segments from the data structure in the same time complexity.

For a rectangle $k$, we define the
\emph{middle segment} $\ell_k$ of $k$. If $k$ is a horizontal rectangle, i.e., $k \in \hdec(P)$, then $\ell_k$ is the line
segment connecting the midpoints of its left and right boundary; if $k$ is a vertical
rectangle, i.e., $k \in \vdec(P)$, then $\ell_k$ is the segment connecting the midpoints of its top and bottom boundary.

We fix a rectangle \(i\), and assume without loss of generality that the rectangles in $\mathcal{S}_i$ are vertical.
Insert the middle segments of all horizontal rectangles in $\hdec(P)$  into the
intersection reporting data structure. Then, for each rectangle $j \in
\mathcal{S}_i$, we query its corresponding middle segment.
By the definition of middle segments, each reported
horizontal segment corresponds to a rectangle $j'$ intersecting $j$. Since we remove each segment as we find it, no rectangle is reported twice.
Repeating this for all $j \in \mathcal{S}_i$
finds the set $\mathcal{C}_i = \{\, j' \colon\, j' \sqcap j , j \in \mathcal{S}_i\,\}$ of
  all horizontal rectangles that intersect at least one rectangle in $\mathcal{S}_i$.
Each query can be charged either to the horizontal segment that
is deleted from the data structure or, in case \(z = 0\),  to the rectangle $j \in \mathcal{S}_i$ that we
are querying. Hence, the total query time sums to $O(n \log n)$.

 For each rectangle in the set $\mathcal{C}_i$, we
should check the distance to every rectangle \(i'\) such that \(i' \sqcap i\).
Doing this explicitly takes $O(n^2)$ time. Thus, summing over all rectangles
\(i\), we get the total running time of \(O(n^3)\).

To bring the running time down to $O(n^2 \log n)$, we create a reverse map of the map
\( i  \mapsto \mathcal{C}_i\). For each rectangle $k$, we build a collection
$\mathcal{L}_k$ that contains $i$ if and only if \(k\) belongs to
$\mathcal{C}_i$. Given a rectangle $j'$, we need to check the distance between
$j'$ and $i'$ for any $(i,i')$ with $i \in \mathcal{L}_{j'}$ and $i \sqcap i'$.
Using the intersection reporting data structure, we
compute for each rectangle $j'$ the set $\mathcal{D}_{j'}$, which is the set of all rectangles intersecting those in $\mathcal{L}_{j'}$.
For each rectangle $i' \in \mathcal{D}_{j'}$, we test
if $\ordist(i',j') = \ordiam - 2$. 
Again recall that  
if we find a pair with $\ordiam$,
then the diameter must be $\ordiam-1$ (otherwise, the diameter is $\ordiam-2$).
This proves Theorem~\ref{theo:DiameterAlgorithm}.


\section{Computation via Matrix Multiplication}
\label{sec:MatrixMultiplication}

In this section we provide an alternative method to compute the diameter and radius. This method also uses the conditions in Theorem~\ref{the:diameter-formula} and \ref{the:radius-formula}, but instead exploits the behavior of matrix multiplication on (0,1)-matrices. Recall that, given two (0,1)-matrices $A$ and $B$, their product is ${(AB)}_{i,j} = \sum_k (A_{i,k} \cdot B_{k,j}) = | \{\, k \colon\, A_{i,k} = 1 \land B_{k,j} = 1\,\}|$.

We define a (0,1)-matrix \(I\), which is used to compute both the diameter and radius:
\begin{displaymath}
I_{i,j} = \begin{cases}
1 & \text{ if } i \sqcap j, \\
0 & \text{ otherwise.}
\end{cases}
\end{displaymath}
In other words, for each pair $i,j$ of rectangles in
$\hdec(P) \cup \vdec(P)$, the matrix $I$ indicates
whether $i$ and $j$ intersect and have different orientations
(one horizontal, one vertical). Note that, for ease of explanation, we have slightly abused the notation and identified rectangles of $\hdec(P) \cup \vdec(P)$ with indices in the matrix.

\subsection{Computing the Diameter} 
We use Theorem~\ref{the:diameter-formula} to compute the diameter. Thus, we need to determine if there exist four rectangles in $\hdec(P) \cup \vdec(P)$ that satisfy the condition of Theorem~\ref{the:diameter-formula}. If so, the diameter will be $\ordiam-1$; otherwise, $\ordiam -2$.
In order to do so, we  define the (0,1)-matrix $D$ that
indicates, for a pair $i,j$ of rectangles in $\hdec(P) \cup \vdec(P)$, whether the
oriented distance between them is $\ordiam$:
\begin{displaymath}
D_{i,j} = \begin{cases}
1 & \text{ if } \ordist(i,j) = \ordiam, \\
0 & \text{ otherwise.}
\end{cases}
\end{displaymath}

By multiplying \(I\) and \(D\), we obtain
\begin{displaymath}
{(ID)}_{i,j'} =
| \{\, i' \colon\, (i \sqcap i') \land (\ordist(i',j') = \ordiam)\,\} |.
\end{displaymath}

In other words, the entry at $(i,j')$ of the product $ID$ counts the number of
rectangles in $\hdec(P) \cup \vdec(P)$ that intersect rectangle $i$ and are oriented
differently from it, and at the same time are at oriented distance $\ordiam$
from rectangle $j'$.
We construct the (0,1)-matrix \(M\) that records when an entry in the product
$ID$ is non-zero:
\begin{displaymath}
M_{i,j} = \begin{cases}
1 & \text{ if } {(ID)}_{i,j} > 0, \\
0 & \text{ otherwise.}
\end{cases}
\end{displaymath}

Finally, we look at the product \(DM\). Note that \((DM)_{i,i'} > 0\) if
and only if
there are two rectangles~$j$ and~$j'$ with~$j \sqcap j'$ such 
that~$\ordist(i,j) = \ordiam$ and~$\ordist(i',j') = \ordiam $

The quantifier on \(j'\) and the condition on its intersection with \(j\) can
be moved just to the right of the quantifier on \(j\) without altering the
meaning of the formula, since both of them are existential quantifiers. 
Therefore, the condition in Theorem~\ref{the:diameter-formula} is satisfied if and only if there exists a $1$-entry in \(I\) whose corresponding entry in \(DM\) is non-zero. This condition can be checked in quadratic time (once matrix $DM$ has been computed) by iterating over the entries of the matrices in quadratic time since the matrices have linearly many rows and columns. 

\subsection{Computing the Radius}
A similar construction can be used to verify the condition in Theorem~\ref{the:radius-formula} and compute the radius.
Similar to the matrix \(D\) given above, we define the (0,1)-matrix \(R\)
which indicates whether a pair of rectangles is at oriented distance at least
$\OrRad$ from each other:
\begin{displaymath}
R_{i,j} = \begin{cases}
1 & \text{ if } \ordist(i,j) \geq \OrRad, \\
0 & \text{ otherwise.}
\end{cases}
\end{displaymath}

By multiplying \(I\) and \(R\), we obtain
\begin{displaymath}
	{(IR)}_{i,j'} =
	| \{\, i' \colon\, (i \sqcap i') \land (\ordist(i',j') \geq \OrRad)\,\} |.
\end{displaymath}

Analogous to matrix \(M\), we 
that indicates whether the corresponding entry of $IR$ is non-zero, as follows:
\begin{displaymath}
N_{i,j} = \begin{cases}
1 & \text{ if } {(IR)}_{i,j} > 0, \\
0 & \text{ otherwise.}
\end{cases}
\end{displaymath}

We now look at the product \(RN\). Note that \((RN)_{i,i'} > 0\) if
and only if
there are two rectangles~$j$ and~$j'$ with~$j \sqcap j'$ such 
that~$\ordist(i,j) \geq \orrad$ and~$\ordist(i',j') \geq \orrad $

By a similar argument as in the diameter case, the condition on Theorem~\ref{the:radius-formula}
 is satisfied if and only if for each $1$-entry in
\(I\) the corresponding entry in \(RN\) is non-zero.
As before, this condition can be checked by iterating over the entries of the matrices in quadratic time once the matrix $RN$ has been computed.

Note that the time taken by the computation of the various matrix products
dominates the time taken by the other loops and operations. 
Each matrix has $O(n)$ rows and columns,
and the product of two \(O(n) \times O(n)\) matrices can
be computed in \( O(n^\omega) \) time. 
We summarize the results of this section in the following theorem. 
\begin{theorem}%
\label{theo:MatrixMultiplication}
  The rectilinear link diameter $\rlddiam(P)$ and radius $\rldrad(P)$ of a rectilinear polygonal domain
  $P$ consisting of $n$ vertices can be computed in \(O(n^\omega)\) time.
\end{theorem}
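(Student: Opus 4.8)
The plan is to assemble the matrices constructed above into a decision procedure for each of Theorems~\ref{the:diameter-formula} and~\ref{the:radius-formula}, and then charge the whole cost to a constant number of Boolean matrix multiplications. First I would compute the table of all oriented distances by invoking Lemma~\ref{lem:OrientedDistanceComputation} once per rectangle of $\hdec(P)\cup\vdec(P)$, in $O(n^2+nh\log h)$ total time; from this table $\ordiam$ and $\orrad$ are obtained by an $O(n^2)$ scan. Since each entry of $I$, $D$ and $R$ is a single predicate on the stored distances and on $\sqcap$, all three $O(n)\times O(n)$ matrices can be filled in $O(n^2)$ time.

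For the diameter I would test the condition of Theorem~\ref{the:diameter-formula}, namely whether some quadruple $(i,i',j,j')$ has $i\sqcap i'$, $j\sqcap j'$, $\ordist(i,j)=\ordiam$ and $\ordist(i',j')=\ordiam$. The step I expect to demand the most care is verifying that the thresholded products encode \emph{exactly} this nested existential statement. The product $(ID)_{i,j'}$ \emph{counts} the intermediate rectangles witnessing $i\sqcap i'$ and $\ordist(i',j')=\ordiam$; the purpose of passing to the $(0,1)$-matrix $M$ before multiplying again is to forget this multiplicity, so that $M_{j,i'}=1$ means precisely that such a partner exists. Only then does $(DM)_{i,i'}>0$ hold iff there are $j,j'$ with $j\sqcap j'$, $\ordist(i,j)=\ordiam$ and $\ordist(i',j')=\ordiam$ (using the symmetry of $\ordist$ and the fact that both inner quantifiers are existential, so their order is immaterial). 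Combining this with $I$, the condition of Theorem~\ref{the:diameter-formula} holds iff some $1$-entry of $I$ has a positive entry in $DM$, which I check by an $O(n^2)$ scan, outputting $\ordiam-1$ in that case and $\ordiam-2$ otherwise.

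The radius is handled by the mirror-image argument for Theorem~\ref{the:radius-formula}, replacing the equality $\ordist=\ordiam$ by the inequality $\ordist\ge\orrad$: I form $R$, compute $IR$, threshold to $N$, and compute $RN$, so that $(RN)_{i,i'}>0$ iff the required $j,j'$ exist for the pair $(i,i')$. The single difference is the final scan: the radius condition is \emph{universally} quantified over all pairs with $i\sqcap i'$, so I must confirm that \emph{every} $1$-entry of $I$ has a positive entry in $RN$ (not merely one), after which I return $\orrad-1$ or $\orrad-2$ accordingly. This distinction between the existential diameter test and the universal radius test is the only structural change between the two cases.

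Finally I would bound the running time. The procedure performs four products of $O(n)\times O(n)$ matrices, each costing $O(n^\omega)$, while the thresholding operations and the two final scans are each $O(n^2)$. Because $\omega>2$ we have $n^2=O(n^\omega)$, and the preprocessing term is dominated as well: with $h\le n$ it satisfies $nh\log h=O(n^2\log n)=O(n^\omega)$. Summing, the total is $O(n^\omega)$, which proves the theorem.
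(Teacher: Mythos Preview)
Your proposal is correct and follows essentially the same approach as the paper: build the $(0,1)$-matrices $I$, $D$, $R$, threshold the products $ID$ and $IR$ to obtain $M$ and $N$, and then scan $DM$ (resp.\ $RN$) against $I$ to test the existential (resp.\ universal) condition of Theorem~\ref{the:diameter-formula} (resp.\ Theorem~\ref{the:radius-formula}). You even make explicit two points the paper leaves implicit---that the $O(n^2+nh\log h)$ preprocessing of Lemma~\ref{lem:OrientedDistanceComputation} is dominated by $O(n^\omega)$, and that the diameter scan is existential while the radius scan is universal---so nothing is missing.
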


\section{Conclusions}

Our algorithms heavily rely on Theorems~\ref{the:diameter-formula} and~\ref{the:radius-formula}. They implicitly do a search among a list of candidates for the diametral pair or radius,
but as mentioned in the introduction they have the advantage that can be used to yield a certificate. For example, if the diameter is $\ordiam-1$, it suffices to report the four rectangles
that satisfy the condition of Theorem~\ref{the:diameter-formula}. Then, if someone wants to verify that any pair in the intersection of the two pairs forms diameter, they have to compute
the oriented distances (to obtain $\ordiam$) and then verify that indeed the four given rectangles satisfy the property of Theorem~\ref{the:diameter-formula} (a similar certificate can be
obtained when the radius is $\orrad-1$ and Theorem~\ref{the:radius-formula}).

The verification runtime is dominated by the time needed to compute the oriented distance between all pairs of rectangles (which can be done in $O(n^2\log \log n)$ time using the result of Chan and Skrepetos, and is not known whether it can be done in subquadratic time~\cite{cs-apspgig-19}). We wonder if a similar certificate approach can be designed for the case in which the diameter is $\ordiam-2$, the radius is $\orrad-2$ and/or other metrics, specifically for the classic link distance (for which no algorithm is yet known).

We note that $\ordiam$ and $\orrad$ can be used to give an approximation of the diameter and radius, respectively, with an additive error of only one unit. However, the running time of computing these two values is almost as large as computing the exact values. It would be interesting to see if there is another way to approximate the diameter or radius.

\shortlong{ 
\begin{wrapfigure}[13]{r}{4cm}
\centering
\vspace{-1em}
\includegraphics{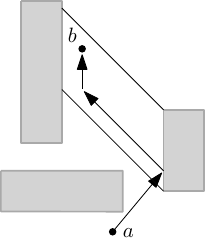}
\caption{Illustration of why our results do not extend to $8$ orientations. }
\label{fig:c-oriented}
\end{wrapfigure}
}{}

This consideration, together with our results, remind us of recent
lower bound results in fine-grained complexity.  For $n$-vertex sparse
unweighted undirected graphs, under the orthogonal vectors conjecture,
computing the diameter and even approximating it within a factor of
$3/2-\varepsilon$ cannot be done in $O(n^{2-o(1)})$ time for any
$\varepsilon > 0$ \cite{DBLP:conf/stoc/RodittyW13},
and
under the hitting set conjecture, computing the radius and even
approximating it within a factor of $3/2-\varepsilon$ cannot be done in
$O(n^{2-o(1)})$ time for any $\varepsilon > 0$
\cite{DBLP:conf/soda/AbboudWW16}.
It is known that both the strong exponential-time hypothesis
and the hitting set conjecture individually
imply the orthogonal vectors conjecture (see Vassilevska Williams' survey \cite{DBLP:conf/iwpec/Williams15}).
As we already pointed out, the (not rectilinear) link distance
computation is 3-SUM hard \cite{DBLP:journals/comgeo/MitchellPS14},
and it is straightforward to adapt the proof to show that the (not
rectilinear) link diameter computation is 3-SUM hard, too.
However, we have been unable to show the 3-SUM hardness of computing
the rectilinear link diameter or radius, nor any hardness of having
an $O(n^{2-o(1)})$-time algorithm based on the
strong exponential-time hypothesis, the orthogonal vectors conjecture,
or the hitting set conjecture so far.
Such a result would show that our algorithms are close to optimal.

\shortlong{ }{
\begin{wrapfigure}[15]{r}{4cm}
\centering
\vspace{-1em}
\includegraphics{figures/c-oriented}
\caption{Illustration of why our results do not extend to $8$ orientations.}
\label{fig:c-oriented}
\end{wrapfigure}
}

A natural way to extend our results would be to consider the $c$-oriented link distance. In this distance we are allowed to use $c$ slopes in our path,
normally in steps of $2\pi/c$ radians. In general allowing more directions would create more natural paths and potentially much shorter paths. Unfortunately
non-orthogonal directions create some problems. The duality between the graph $\orgr(P)$ and oriented distance relies on the observation that within a horizontal
rectangle we can reach every point with just one bend, regardless of where we enter the rectangle with a vertical ray. When the directions are no longer orthogonal
we cannot give this guarantee anymore (see Figure~\ref{fig:c-oriented}). This may be solved by counting bends that ``skip'' over orientations to count heavier.
That is, within the path, we force adjacent edges to have adjacent orientations, even if those edges have length 0. For example, in the 8-oriented distance, an L-shaped
path would have a cost of 3. However, it is not clear how to extend our results to this model. More importantly, this distance would differ a lot from the classical link distance.

\bibliography{bibliography/conference}

\end{document}